\newtheorem{theorem}{Theorem}[section]
\newtheorem{lemma}[theorem]{Lemma}
\newtheorem{proposition}[theorem]{Proposition}
\newtheorem{remark}[theorem]{Remark}	
\newtheorem{problem}[theorem]{Problem}
\title{\LARGE \bf Optimal Transport over Deterministic  Discrete-time Nonlinear Systems using Stochastic Feedback Laws
}
\author{Karthik Elamvazhuthi$^{1}$, Piyush Grover$^{2}$, and Spring Berman$^{1}$
\thanks{The first and third authors were supported in part by ONR Young Investigator Award N00014-16-1-2605. The first author was supported in part by MERL. The second author was solely supported by MERL.}
\thanks{$^{1}$Karthik Elamvazhuthi and Spring Berman are with the School for Engineering of
	Matter, Transport and Energy, Arizona State University, Tempe, AZ, 85287
	USA {\tt\small \{karthikevaz, Spring.Berman\}@asu.edu}.}%
\thanks{$^{2}$Piyush Grover is with Mitsubishi Electric Research Laboratories (MERL),
Cambridge, MA 02139
        {\tt\small grover@merl.com}.}%
}
\begin{document}

\maketitle
\thispagestyle{empty}
\pagestyle{empty}

\begin{abstract}
This paper considers the relaxed version of the {\it transport problem} for general nonlinear control systems, where the objective is to design time-varying feedback laws that transport a given initial probability measure to a target probability measure under the action of the closed-loop system. To make the problem analytically tractable, we consider control laws that are {\it stochastic}, i.e., the control laws are maps from the state space of the control system to the space of probability measures on the set of admissible control inputs. Under some controllability assumptions on the control system as defined on the state space, we show that the transport problem, considered as a controllability problem for the lifted control system on the space of probability measures, is well-posed for a large class of initial and target measures. We use this to prove the well-posedness of a fixed-endpoint optimal control problem defined on the space of probability measures, where along with the terminal constraints, the goal is to optimize an objective functional along the trajectory of the control system. This optimization problem can be posed as an infinite-dimensional linear programming problem. This formulation facilitates numerical solutions of the transport problem for low-dimensional control systems, as we show in two numerical examples.
\end{abstract}

\section{INTRODUCTION}

In this paper, we consider a variation of the {\it optimal transport problem} \cite{villani2008optimal}. The 
objective of this problem
is to construct a map such that a given probability measure is {\it pushed forward} to a target probability measure in some optimal manner. Initially motivated by resource allocation problems in economics, this problem has potential applications in many engineering problems involving the control of large-scale distributed systems \cite{djehiche2016mean}, in which these measures could represent the distribution of an ensemble of agents such as a swarm of robots \cite{elamvazhuthi2015optimal} or the distribution of nodes in an electric power grid \cite{bagagiolo2014mean} or a wireless network \cite{tembine2014energy}. For example, we have employed this modeling approach in the design and experimental validation of stochastic coverage and task allocation strategies for swarms of robots \cite{deshmukh2018mean}.

In the original formulation of optimal transport, 
the dynamics of the agents are simplistic from a control-theoretic point of view. There have been some recent efforts to extend classical optimal transport theory to the case where the cost functions and transport maps are subject to dynamical constraints arising from control systems.     
Toward this end, \cite{hindawi2011mass} considers the optimal transport problem for linear time-invariant systems with linear quadratic cost functions. For a smaller class of cost functions, the case of linear time-varying systems is addressed in \cite{chen2017optimal}. 
There have also been efforts to extend
the theory to nonlinear driftless control-affine systems in the framework of {\it sub-Riemannian optimal transport} \cite{agrachev2009optimal,figalli2010mass,khesin2009nonholonomic}. See also \cite{elamvazhuthi2016optimal}, in which we develop connections between computational optimal transport over continuous-time nonlinear control systems and optimal transport on finite state spaces. Closely related to such optimal transport problems is the theory of mean-field games and mean-field type controls \cite{bagagiolo2014mean,djehiche2016mean,tembine2014energy}.

The original optimal transport problem, i.e., the {\it Monge problem}, searches for a deterministic map that maps a given measure to a target measure. In view of the analytical difficulties involved in this original formulation 
of Monge, Kantorovich introduced a relaxed version of the problem in 1942, in which the map is allowed to be stochastic. 
This form of relaxation, which is used to convexify nonlinear control problems, has a rich history in control theory in the context of {\it Young measures} or {\it relaxed control} \cite{florescu2012young}. Such a measure-based convexification of optimization problems has been used for numerical synthesis of control laws \cite{hernandez2012discrete,lasserre2008nonlinear,raghunathan2014optimal}. 

In this paper, we use a similar relaxation procedure to consider the optimal transport problem for discrete-time nonlinear control systems with a compact set of admissible controls. Before considering the issue of optimality, we consider the problem of controllability. First, we prove that controllability of the original control system implies controllability of the control system induced on the space of probability measures. Next, we show that we can frame the control-constrained optimal transport problem of controllable nonlinear systems as a linear programming problem, as in the Kantorovich formulation of the optimal transport problem. Unlike our previous work \cite{elamvazhuthi2016optimal}, which focused on computational aspects of optimal transport problems for nonlinear systems with a particular control-affine structure, in this paper we solve the optimal transport problem for general nonlinear control systems in discrete time. 


\section{NOTATION AND TERMINOLOGY}

Let $X$ be a separable finite-dimensional manifold (for example, the Euclidean space $\mathbb{R}^M$) that is a metric space. The set of admissible control inputs will be denoted by $U$. We will assume that the set $U$ is a compact subset of a metric space. Note that $X \times U$, equipped with the product topology,  is a metrizable and separable space under these assumptions. We will denote by $\mathcal{B}(X)$, $\mathcal{B}(U)$, and $\mathcal{B}(X \times U)$ the collection of Borel measurable sets of $X$, $U$, and $X \times U$, respectively. The space of Borel probability measures on the sets $X$ and $U$ will be denoted by $\mathcal{P}(X)$ and $\mathcal{P}(U)$, respectively. For a metric space $Y$, let $C_b(Y)$ be the set of bounded continuous functions on $Y$. We will say that a sequence of measures $(\mu_n)_{n=1}^{\infty} \in \mathcal{P}(Y)$ converges narrowly to a limit measure $\mu \in \mathcal{P}(Y)$ if the sequence $\int_{Y} f(\mathbf{y}) d\mu_n(\mathbf{y})$ converges to $\int_{Y} f(\mathbf{y}) d\mu(\mathbf{y})$ for every $f \in C_b(Y)$. The topology on $\mathcal{P}(Y)$ corresponding to this convergence will be referred to as the narrow topology. For a set $M \subset X$ and $p \in \mathbb{Z}_+$, we will define the set $ D^p_M = \big \lbrace \sum_{i=1}^p c_i\delta_{\mathbf{y}_i}; ~\mathbf{y}_i \in M, ~ c_i \in [0,1] ~ \text{for} ~ i  \in \lbrace 1,...,p \rbrace, ~ \sum_{i = 1}^pc_i=1 \big \rbrace$, where $\delta_{\mathbf{x}}$ is the Dirac measure concentrated at the point $\mathbf{x} \in X$. We will also define the set $D_M = \cup_{p \in \mathbb{Z}_+} D_M^p$. The support of a measure $\mu \in \mathcal{P}(X)$ will be denoted by ${\rm supp} ~ \mu = \lbrace \mathbf{x} \in X; ~\mathbf{x} \in N_\mathbf{x} ~  {\rm implies~that} ~ \mu(N_\mathbf{x}) >0, ~{\rm where~} N_\mathbf{x} ~ {\rm  is~a~ neighborhood ~ of~} \mathbf{x}  \rbrace$. We define $\mathcal{Y}(X,U)$ as the set of stochastic feedback laws, i.e., maps of the form $K: X \times \mathcal{B}(U) \rightarrow \mathbb{R}$, where $K(\cdot,A)$ is Borel measurable for each $A \in \mathcal{B}(U)$ and $K(\mathbf{x},\cdot) \in \mathcal{P}(U)$ for each $\mathbf{x}\in X$. For a continuous map $F : Y \rightarrow X$, the pushforward map $F_{\#}:\mathcal{P}(Y) \rightarrow \mathcal{P}(X)$ is defined by 
\begin{align*}
(F_{\#}\mu) (A) = \mu(F^{-1}(A)) = \int_Y \mathbf{1}_A(F(\mathbf{y}))d\mu(\mathbf{y})
\end{align*}
for each $A \in \mathcal{B}(X)$, where $\mathbf{1}_B$ denotes the indicator function of the set $B \in \mathcal{B}(X)$ and $\mu \in \mathcal{P}(Y)$. 

\section{PROBLEM FORMULATION}

Now we are ready to state the problems addressed in this paper. Consider the nonlinear discrete-time control system
\begin{equation} \label{eq:origsys}
\mathbf{x}_{n+1} = T(\mathbf{x}_n,\mathbf{u}_n), ~ n = 0,1,...~; ~~~~\mathbf{x}_0 \in X, 
\end{equation}
where $\mathbf{x}_n \in X$ for each $n \in \mathbb{Z}_{+}$, $(\mathbf{u}_i)_{i=0}^\infty$ is a sequence in a compact set $U$, and $T:X \times U \rightarrow X$ is a continuous map with respect to the topologies $\mathcal{T}(X)$, $\mathcal{T}(U)$, and $\mathcal{T}(X) \times \mathcal{T}(U)$ defined on $X$, $U$, and $X \times U$, respectively.  Then this nonlinear control system induces a control system on the space of measures $\mathcal{P}(X)$, given by
\begin{equation}
\mu_{n+1} = T(\cdot,\mathbf{u}_n)_{\#}\mu_n,~ n = 0,1,...~; ~~ \mu_0 \in \mathcal{P}(X). \label{eq:indctrlsys}
\end{equation}

The first problem of interest is the following.

\begin{problem} \label{reachprob0}
	(\textbf{Controllability problem with deterministic control}). Let $N \in \mathbb{Z}_+$ be a specified final time. Given an initial measure $\mu_0 \in \mathcal{P}(X)$ and a target measure $\mu^f \in \mathcal{P}(X)$, does there exist a sequence of feedback laws $\mathbf{v}_n:X \rightarrow U$ such that the closed-loop system satisfies
	\begin{equation*}
	\mu_{n+1} = T^{cl,n}_{\#}\mu_n,  ~~ n=0,1,...,N-1; ~~~~
	 \mu_{N} = \mu^f,
	\end{equation*}
where $T^{cl,n}_{\#}:\mathcal{P}(X) \rightarrow \mathcal{P}(X)$ is the pushforward map corresponding to the closed-loop map $T^{cl,n} : X \rightarrow X$ defined by $T^{cl,n}(\mathbf{x}) = T(\mathbf{x},\mathbf{v}_n(\mathbf{x}))$ for all $\mathbf{x} \in X$?
\end{problem}

This problem is unsolvable in general. For instance, consider the case when $X=\mathbb{R}$, $U = [-1,1]$, $T(\mathbf{x},\mathbf{u}) = \mathbf{x}+\mathbf{u}$ for each $(\mathbf{x},\mathbf{u}) \in X \times U$, $\mu_0 = \delta_0$ is the Dirac measure concentrated at the point $0 \in \mathbb{R}$, and $\mu^f = \frac{1}{2}\delta_{-1} +\frac{1}{2}\delta_{+1}$ is the sum of Dirac measures concentrated at $-1$ and $1$, respectively. This example does not admit any solutions to the controllability problem because a deterministic map cannot take the measure concentrated at the point $0$ and distribute it onto measures concentrated at $-1$ and $+1$. However, there might be several important cases where the problem does admit a solution. For example, when $X = \mathbb{R}^M$, $U = \mathbb{R}^M$ (which is not compact, in contrast to the assumptions made in this paper),  $T(\mathbf{x},\mathbf{u}) = \mathbf{u}$ for all $(\mathbf{x},\mathbf{u}) \in X \times U$, and $N=1$, this problem is equivalent to the classical optimal transport problem \cite{villani2008optimal}, for which solutions are known to exist when the initial and final measures are absolutely continuous with respect to the Lebesgue measure and have a finite second moment. On the other hand, this problem is expected to be highly challenging for general nonlinear control systems without any further constraints on the control set $U$, which is only assumed to be compact, given a final time $N \geq 1 $. Hence, to make the problem analytically tractable, we consider the following relaxed problem. 

\begin{problem} \label{reachprob}
(\textbf{Controllability problem with stochastic control}) Given a final time $N \in \mathbb{Z}_+$, an initial measure $\mu_0 \in \mathcal{P}(X)$, and a target measure $\mu^f \in \mathcal{P}(X)$, determine whether there exists a sequence of stochastic feedback laws $K_n \in \mathcal{Y}(X,U)$ such that the closed-loop system satisfies
\begin{equation}
\label{eq:ctrbcond}
\mu_{n+1} = T^{cl,n}_{\#}\mu_n  , ~~ n =0,1,...,N-1;  ~~~\mu_{N} = \mu^f, 
\end{equation}
where the closed-loop pushforward map $T^{cl,n}_{\#}$ is given by 
\begin{align}
(T^{cl,n}_{\#}\mu)(A) &= \int_X \int_U \mathbf{1}_A(T(\mathbf{x},\mathbf{u}))K_n(\mathbf{x},d\mathbf{u})d\mu(\mathbf{x}).
\end{align}
\end{problem}
Problem \ref{reachprob} can be considered a relaxation of Problem \ref{reachprob0} in the sense that deterministic control laws $\mathbf{v} :X \rightarrow U$ are just special types of stochastic control laws identified through the mapping $\mathbf{v}(\mathbf{x})   \mapsto \delta_{\mathbf{v}(\mathbf{x})}$. 

After addressing Problem \ref{reachprob}, we will address the following optimization problem.
\begin{problem} \label{OCP}
(\textbf{Fixed-time, fixed-endpoint optimal control problem})
Suppose that
 $ c:X \times U \rightarrow \mathbb{R}$ is a continuous map.   Given a final time $N \in \mathbb{Z}_+$, an initial measure $\mu_0 \in \mathcal{P}(X)$, and a target measure $\mu^f \in \mathcal{P}(X)$,
 determine whether the following optimization problem admits a solution:
 \begin{eqnarray}
 \label{eq:op1main1}
 \min_{\substack{\mu_m \in \mathcal{P}(X) \\ K_m \in \mathcal{Y}(X,U)}}~~ \sum_{m=0}^{N -1}\int_{X} \int_{U} c(\mathbf{x},\mathbf{u})K_m(\mathbf{x},d\mathbf{u})d\mu_m(\mathbf{x}) 
 \end{eqnarray}
subject to the constraints
\begin{equation}
\mu_{n+1} = T^{cl,n}_{\#}\mu_n  , ~~ n =0,1,...,N-1; ~~~\mu_{N} = \mu^f. 
\label{eq:op1main2}
\end{equation}
\end{problem}

Note that the control problem solved in this paper can be considered an extension of the problem addressed in \cite{raghunathan2014optimal}, in which the target measure is a Dirac measure. On the other hand, we consider more general target measures, but only address a finite-horizon optimal control problem.

\section{CONTROLLABILITY ANALYSIS}

In this section, we will address Problem \ref{reachprob}. Toward this end, we present the following definitions, which will be needed to define sufficient conditions under which Problem \ref{reachprob} 
admits a solution. Let $R^\mathbf{x}_1 = \lbrace T(\mathbf{x},\mathbf{u});~\mathbf{u} \in U \rbrace$ be the set of reachable states from $\mathbf{x} \in X$ at the first time step. Then we inductively define the set $R^\mathbf{x}_m = \cup_{\mathbf{y} \in R^\mathbf{x}_{m-1}} \lbrace   T(\mathbf{y},\mathbf{u});~ \mathbf{u} \in U \rbrace$ for each $m \in \mathbb{Z}_+ - \lbrace 1 \rbrace $. 


Instead of proving that we can always find a sequence of stochastic feedback laws $K_n$ such that the system of equations \eqref{eq:ctrbcond} is satisfied, 
we will consider the alternative ``convexified problem'' in which we look for measures $\nu_n$ in the space $\mathcal{P}(X \times U)$ such that, for given initial and target measures $\mu_0, \mu^f \in \mathcal{P}(X)$, the following constraints are satisfied:
\begin{equation}\label{eq:Modcont}
\mu_{n+1} = T_{\#} \nu_{n}, ~~ n = 0,1,...,N-1; ~~~~\mu_{N} = \mu^f,
\end{equation}
with $\nu_{n}(A \times U) = \mu_{n}(A)$ for all $A \in \mathcal{B}(X)$. 
We will first solve Problem \ref{reachprob} for the special case of Dirac measures, and then extend the result to general measures using a density-based argument that is standard in measure-theoretic probability. 


Now we are ready to present several results that address Problem \ref{reachprob}.

\begin{proposition}
	\label{DiracCtrb}
Let $\mu_0 = \delta_{\mathbf{x}_0}$ for some $\mathbf{x}_0 \in X$. Let $ \mu^f \in D^p_{M}$ for a compact subset $M$ of $X$, for some $p \in \mathbb{Z}_+$, such that ${\rm supp} ~ \mu^f \subseteq  R_N^{\mathbf{x}_0}$. Then there exists a sequence of measures $(\nu_m)_{m=0}^{N-1} \in \mathcal{P} (X \times U)$ such that 
\begin{equation}
\mu_{n+1} = T_{\#} \nu_{n}, ~~ n = 0,1,...,N-1,
\end{equation}
with $\nu_{n}(A \times U) = \mu_{n}(A)$ for all $A \in \mathcal{B}(X)$ and $\mu_{N} = \mu^f$.
\end{proposition}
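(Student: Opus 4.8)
The plan is to prove this by an explicit construction that splits the unit mass sitting at $\mathbf{x}_0$ among $p$ deterministic trajectories, one for each atom of $\mu^f$. Write $\mu^f = \sum_{i=1}^p c_i\delta_{\mathbf{y}_i}$ with $\mathbf{y}_i \in M$ and $\sum_i c_i = 1$; discarding the atoms with $c_i = 0$ (which contribute nothing), I may assume $c_i > 0$ for every $i$, so that each $\mathbf{y}_i$ lies in ${\rm supp}~\mu^f$ and hence, by hypothesis, in $R_N^{\mathbf{x}_0}$. The goal is then to route mass $c_i$ from $\mathbf{x}_0$ to $\mathbf{y}_i$ along an admissible $N$-step path and to assemble these paths into the couplings $\nu_n$.

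The key first step is to convert membership in the reachable set into a concrete trajectory. By the inductive definition $R^\mathbf{x}_m = \cup_{\mathbf{y} \in R^\mathbf{x}_{m-1}}\{T(\mathbf{y},\mathbf{u}):\mathbf{u}\in U\}$, the statement $\mathbf{y}_i \in R^{\mathbf{x}_0}_N$ unwinds backward: there exist $\mathbf{z}^i_{N-1} \in R^{\mathbf{x}_0}_{N-1}$ and a control $\mathbf{u}^i_{N-1}\in U$ with $\mathbf{y}_i = T(\mathbf{z}^i_{N-1},\mathbf{u}^i_{N-1})$, and iterating this descent down to $R^{\mathbf{x}_0}_1$ produces, for each $i$, a sequence of states $\mathbf{x}_0 = \mathbf{z}^i_0, \mathbf{z}^i_1, \ldots, \mathbf{z}^i_N = \mathbf{y}_i$ and controls $\mathbf{u}^i_0, \ldots, \mathbf{u}^i_{N-1}\in U$ satisfying $\mathbf{z}^i_{k+1} = T(\mathbf{z}^i_k,\mathbf{u}^i_k)$ for $k = 0,\ldots,N-1$.

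With these $p$ trajectories in hand, I would define the intermediate state measures and the couplings as the corresponding weighted sums of Dirac masses,
\[
\mu_n = \sum_{i=1}^p c_i\,\delta_{\mathbf{z}^i_n}\ \ (0 \le n \le N),\qquad \nu_n = \sum_{i=1}^p c_i\,\delta_{(\mathbf{z}^i_n,\mathbf{u}^i_n)}\ \ (0 \le n \le N-1),
\]
each $\nu_n$ being an element of $\mathcal{P}(X\times U)$ since the $c_i$ are nonnegative and sum to one. The three required properties then follow immediately from this atomic structure, with no real computation: every path begins at $\mathbf{x}_0$, so $\mu_0 = (\sum_i c_i)\delta_{\mathbf{x}_0} = \delta_{\mathbf{x}_0}$; every path ends at $\mathbf{y}_i$, so $\mu_N = \sum_i c_i\delta_{\mathbf{y}_i} = \mu^f$; the marginal identity $\nu_n(A\times U) = \mu_n(A)$ reduces to $\delta_{(\mathbf{z}^i_n,\mathbf{u}^i_n)}(A\times U) = \mathbf{1}_A(\mathbf{z}^i_n) = \delta_{\mathbf{z}^i_n}(A)$; and the pushforward identity reduces to $T_\#\delta_{(\mathbf{z}^i_n,\mathbf{u}^i_n)} = \delta_{T(\mathbf{z}^i_n,\mathbf{u}^i_n)} = \delta_{\mathbf{z}^i_{n+1}}$, whence $T_\#\nu_n = \sum_i c_i\delta_{\mathbf{z}^i_{n+1}} = \mu_{n+1}$.

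The construction is entirely elementary once the trajectories are produced, so there is no serious analytic obstacle in this Dirac case; the one place that genuinely needs care is the backward recursion through the inductive definition of $R^{\mathbf{x}_0}_N$, which is exactly what turns the abstract reachability hypothesis into the usable control sequences driving the split. It is worth noting that the continuity of $T$ and compactness of $U$ are not invoked here at all; those hypotheses enter only when one later passes from this atomic result to arbitrary initial and target measures via a density argument, and when disintegrating each $\nu_n$ to recover a stochastic feedback law $K_n \in \mathcal{Y}(X,U)$ as required by Problem~\ref{reachprob}.
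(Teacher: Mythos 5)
Your proposal is correct and follows essentially the same route as the paper's own proof: split the unit mass at $\mathbf{x}_0$ into $p$ deterministic trajectories obtained from the reachability hypothesis, form the atomic couplings $\nu_n = \sum_i c_i \delta_{(\mathbf{z}^i_n,\mathbf{u}^i_n)}$, and conclude by linearity of $T_{\#}$. If anything, your write-up is slightly more careful than the paper's, since you explicitly discard zero-weight atoms (whose locations need not lie in ${\rm supp}~\mu^f$ and hence need not be reachable), verify the marginal condition $\nu_n(A\times U)=\mu_n(A)$, and keep the state/control indexing consistent.
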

\begin{proof}
Let $ \mu^f = \sum_{i=1}^p c^i\delta_{\mathbf{y}^i}$, where $\sum_{i=1}^pc^i=1$, for some $\mathbf{y}^i \in X$. By assumption, ${\rm supp} ~ \mu^f \subseteq R^{\mathbf{x}_0}_N$. Hence, for each $i \in \lbrace 1,...,p \rbrace$, there exists a sequence of inputs $(\mathbf{u}^i)_{n=0}^N$ such that the nonlinear discrete-time control system
\begin{equation}
\mathbf{x}^i_{n+1} = T(\mathbf{x}_n^i,\mathbf{u}_n^i), ~~ n = 0,1,...,N-1; ~~~\mathbf{x}^i_0 =\mathbf{x}^0
\end{equation}
satisfies $\mathbf{x}_N = \mathbf{y}^i$ for all $i \in \lbrace 1,...,p \rbrace $. We define $\nu^i_n = \delta_{(\mathbf{x}_{n-1}^i,\mathbf{u}_n^i)} \in \mathcal{P}(X \times U)$. Note that $(T_{\#} \nu^i_n)(A) = \delta_{\mathbf{x}^i_n}(A) $ for all $A \in \mathcal{B}(X)$ and all $i \in \lbrace 1,...,p \rbrace$. Then the result follows from the linearity of the operator $T_{\#}:\mathcal{P}(X \times U) \rightarrow \mathcal{P}(X)$ by setting $\nu_{n} =\sum_{i=1}^p c^i \nu^i_{n}$ for all $ n\in \lbrace 0,1,...,N-1 \rbrace$. In particular, for this choice of $\nu_n$, we have that $(T_{\#} \nu_{n}) = \sum_{i=1}^p c^i \mu^i_{n+1}$ for each $ n \in \lbrace 0,1,...,N-1 \rbrace $, and hence that $(T_{\#} \nu_{N-1}) = \sum_{i=1}^p c^i\delta_{\mathbf{y}^i} =  \mu^f$. 
\end{proof}

The next result follows immediately from Proposition \ref{DiracCtrb}.

\begin{lemma}
	\label{duadirac}
Let $\mu_0 \in  D^p_{A}$ and $ \mu^f \in D^q_{A}$ for a compact subset $A$ of $X$, for some $p,q \in \mathbb{Z}_+$, such that ${\rm supp} ~ \mu^f \subseteq  R_N^{\mathbf{x}}$ for each $\mathbf{x} \in {\rm supp} ~ \mu_0$. Then there exists a sequence of measures $(\nu_m)_{m=0}^{N-1} \in \mathcal{P} (X \times U)$ such that 
	\begin{equation}
	\mu_{n+1} = T_{\#} \nu_{n}, ~~ n = 0,1,...,N-1, 
	\end{equation}
	with $\nu_{n}(A \times U) = \mu_{n}(A)$ for all $A \in \mathcal{B}(X)$, and $\mu_{N} = \mu^f$.
\end{lemma}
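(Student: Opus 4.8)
The plan is to reduce the statement to Proposition \ref{DiracCtrb} by decomposing the initial measure into its atoms, applying the Dirac case to each atom separately, and then superposing the resulting transport plans. First I would write $\mu_0 = \sum_{j=1}^{p} a_j \delta_{\mathbf{x}^j}$ with $\mathbf{x}^j \in A$, $a_j \in [0,1]$, and $\sum_{j=1}^p a_j = 1$, which is possible precisely because $\mu_0 \in D^p_A$. The key point is that the hypothesis $\mathrm{supp}\,\mu^f \subseteq R_N^{\mathbf{x}}$ holds for \emph{each} $\mathbf{x} \in \mathrm{supp}\,\mu_0$, so in particular it holds for every atom $\mathbf{x}^j$.

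Next, for each fixed $j \in \{1,\dots,p\}$ I would apply Proposition \ref{DiracCtrb} with initial measure $\delta_{\mathbf{x}^j}$ (playing the role of $\delta_{\mathbf{x}_0}$) and target measure $\mu^f \in D^q_A$. Since $A$ is compact and $\mathrm{supp}\,\mu^f \subseteq R_N^{\mathbf{x}^j}$, the hypotheses of the proposition are met, so it yields a sequence $(\nu_m^j)_{m=0}^{N-1} \in \mathcal{P}(X \times U)$ together with intermediate marginals $\mu_n^j$ satisfying $\mu_0^j = \delta_{\mathbf{x}^j}$, $\mu_{n+1}^j = T_{\#}\nu_n^j$, $\nu_n^j(B \times U) = \mu_n^j(B)$ for all $B \in \mathcal{B}(X)$, and $\mu_N^j = \mu^f$. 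I would then superpose these plans by setting
\begin{equation*}
\nu_m = \sum_{j=1}^{p} a_j \nu_m^j, \qquad \mu_n = \sum_{j=1}^{p} a_j \mu_n^j,
\end{equation*}
noting that $\nu_m \in \mathcal{P}(X \times U)$ because a convex combination of probability measures is again a probability measure.

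Finally I would verify the three required conditions using linearity. The recursion follows from linearity of the pushforward operator $T_{\#}$, since $\mu_{n+1} = \sum_j a_j \mu_{n+1}^j = \sum_j a_j T_{\#}\nu_n^j = T_{\#}\nu_n$; the marginal constraint follows termwise, $\nu_n(B \times U) = \sum_j a_j \nu_n^j(B \times U) = \sum_j a_j \mu_n^j(B) = \mu_n(B)$; and the endpoint condition follows from $\sum_j a_j = 1$, giving $\mu_N = \sum_j a_j \mu^f = \mu^f$, while $\mu_0 = \sum_j a_j \delta_{\mathbf{x}^j}$ recovers the prescribed initial measure. There is no genuine obstacle here: the only point requiring a moment of care is checking that the marginal constraint and the recursion are compatible under superposition, and both reduce immediately to the linearity of $T_{\#}$ and the additivity of the marginal map, which is exactly why the result follows at once from the Dirac case.
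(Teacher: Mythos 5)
Your proof is correct and follows essentially the same route as the paper: decompose $\mu_0$ into its atoms, apply Proposition \ref{DiracCtrb} to each atom $\delta_{\mathbf{x}^j}$ with the common target $\mu^f$, and superpose the resulting plans as $\nu_n = \sum_j a_j \nu_n^j$ using linearity of $T_{\#}$ and of the marginal constraint. If anything, your write-up is slightly more careful than the paper's, which contains a typo (initializing $\eta^i_0 = \mu_0$ rather than $\eta^i_0 = \delta_{\mathbf{y}^i}$) and leaves the linearity verifications implicit.
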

\begin{proof}
Let $ \mu_0 = \sum_{i=1}^p c^i\delta_{\mathbf{y}^i}$, where $\sum_{i=1}^pc^i=1$, for some $\mathbf{y}^i \in X$. By assumption, ${\rm supp} ~ \mu^f  \subseteq \cap_{i=1}^p R^{\mathbf{y}^i}_N$. From Proposition \ref{DiracCtrb}, there exist measures $\nu_n^i \in \mathcal{P}(X \times U)$ such that if $\eta^i_0 = \mu_0$, then
\begin{equation}
\eta_{n+1}^i = T_{\#} \nu^i_{n}, ~~ n = 0,1,...,N-1, 
\end{equation}
with $\nu_{n}^i(A \times U) = \eta_{n}^i(A)$ for all $A \in \mathcal{B}(X)$, and $\eta^i_{N} = \mu^f$. The result follows by setting $\nu_{n} = \sum_{i =1}^{p}c^i \nu^i_{n}$ for all $ n\in \lbrace 0,1,...,N-1 \rbrace$.
\end{proof}
In order to prove the next proposition, we recall a well-known result, which follows from \cite{pedersen2012analysis}[Proposition 2.5.7], that  probability measures can be approximated using linear combinations of Dirac measures.



\begin{theorem}
	\label{dense_LC}
Let $Y$ be a locally compact Hausdorf space $Y$. Then the set of elements in $\mathcal{P}(Y)$ with support contained in a compact subset $M \subseteq Y$ is a convex and narrowly compact subset of $\mathcal{P}(Y)$. Additionally, the set $D_M$ is narrowly dense in the subset of $\mathcal{P}(Y)$ with supports contained in $M$.
\end{theorem}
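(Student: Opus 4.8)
The plan is to reduce everything to the compact Hausdorff space $M$ itself, since a probability measure on $Y$ supported in $M$ is, by restriction, the same object as a probability measure on $M$. Write $\mathcal{P}_M$ for the set of measures in $\mathcal{P}(Y)$ with support contained in $M$, and let $\iota : \mathcal{P}(M) \rightarrow \mathcal{P}(Y)$ be the inclusion-pushforward, whose image is exactly $\mathcal{P}_M$. Convexity of $\mathcal{P}_M$ is immediate: the support of $\lambda \mu_1 + (1-\lambda)\mu_2$ is contained in $\mathrm{supp}\,\mu_1 \cup \mathrm{supp}\,\mu_2 \subseteq M$.

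For narrow compactness, I would first establish compactness on $M$ and then transport it. By the Riesz representation theorem, $C(M)^*$ is the space of finite signed Radon measures on $M$, and $\mathcal{P}(M)$ sits inside the closed unit ball of $C(M)^*$, since every probability measure has total variation norm equal to its mass, namely $1$. Moreover $\mathcal{P}(M)$ is cut out of that ball by the weak-$*$ closed conditions $\langle \mu, f\rangle \geq 0$ for all $f \geq 0$ and $\langle \mu, \mathbf{1}\rangle = 1$, where $\mathbf{1} \in C(M)$ because $M$ is compact. Banach--Alaoglu then gives that $\mathcal{P}(M)$ is weak-$*$ compact. Since restriction $f \mapsto f|_M$ carries $C_b(Y)$ into $C(M)$, the map $\iota$ is continuous from the weak-$*$ topology on $\mathcal{P}(M)$ to the narrow topology on $\mathcal{P}(Y)$; hence $\mathcal{P}_M = \iota(\mathcal{P}(M))$ is narrowly compact, being the continuous image of a compact set.

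For density I would use the Krein--Milman theorem. The extreme points of the compact convex set $\mathcal{P}(M)$ are precisely the Dirac measures $\lbrace \delta_{\mathbf{x}} : \mathbf{x} \in M \rbrace$: a Dirac is clearly extreme, while any non-Dirac $\mu$ admits a Borel set $A$ with $0 < \mu(A) < 1$ and hence the nontrivial convex decomposition $\mu = \mu(A)\,\mu(\cdot \mid A) + \mu(A^c)\,\mu(\cdot \mid A^c)$. By Krein--Milman, $\mathcal{P}(M)$ equals the narrow closure of the convex hull of these Diracs. That convex hull is exactly $D_M$, which is already convex, so its narrow closure is all of $\mathcal{P}(M)$. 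Pushing this density forward through the continuous surjection $\iota$, using that continuous images of dense sets are dense in the image, shows that $D_M$ is narrowly dense in $\mathcal{P}_M$. A constructive alternative, avoiding Krein--Milman, is to partition $M$ into finitely many small Borel cells, place an atom weighted by the $\mu$-mass of each cell at a representative point, and invoke uniform continuity of test functions on the compact set $M$ to obtain narrow convergence as the mesh shrinks.

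The step I expect to be the main obstacle is the bookkeeping that makes the reduction to $M$ legitimate, namely confirming that the narrow topology on $\mathcal{P}_M$ genuinely agrees with the weak-$*$ topology transported from $\mathcal{P}(M)$. Because I only ever push compactness and density forward through $\iota$, I need continuity of $\iota$, which uses only restriction of test functions and not the Tietze-type extension that can fail on non-normal locally compact spaces, so the argument does not stall there; but one must still check that $\iota$ is a bijection onto $\mathcal{P}_M$ and that $\mathcal{P}(Y)$ is Hausdorff in the narrow topology, so that the compact image is also closed. Identifying the extreme points of $\mathcal{P}(M)$ with the Diracs is the other place requiring a short but genuine argument.
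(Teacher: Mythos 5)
The first thing to note is that the paper contains no proof of Theorem \ref{dense_LC} at all: it is stated as a known result and attributed to \cite{pedersen2012analysis}[Proposition 2.5.7]. So your argument should be judged as supplying the proof that the paper delegates to the literature, and what you give is precisely the standard functional-analytic machinery behind such a citation: identify the measures supported in $M$ (your $\mathcal{P}_M$) with $\mathcal{P}(M)$, embed $\mathcal{P}(M)$ into $C(M)^*$ via the Riesz representation theorem, obtain compactness from Banach--Alaoglu together with weak-$*$ closedness of the positivity and normalization constraints, obtain density of $D_M$ from Krein--Milman once the extreme points are identified with the Dirac measures, and transport both conclusions through the continuous map $\iota$. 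Your remarks that continuity of $\iota$ needs only restriction of test functions (no Tietze-type extension) and that compactness of a continuous image requires no Hausdorffness of the target are both exactly right; the constructive alternative you sketch (atomizing $\mu$ over a fine partition) is the proof most texts give for metrizable $M$.

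One caveat, which concerns the theorem's stated generality rather than the paper's use of it. For a non-metrizable compact Hausdorff $M$, Borel probability measures need not be Radon, and then two steps of your argument become delicate: (i) the correspondence between $\mathcal{P}(M)$ and the positive normalized functionals on $C(M)$ is no longer a bijection (and even $\mathrm{supp}\,\mu \subseteq M \Rightarrow \mu(Y\setminus M)=0$ requires either second countability of $Y$ or Radon-ness of $\mu$), and (ii) your identification of the extreme points fails as stated, because there exist non-Dirac Borel probability measures taking only the values $0$ and $1$ --- the Dieudonn\'e measure on $[0,\omega_1]$ is the classical example --- and for such a measure no Borel set $A$ with $0<\mu(A)<1$ exists. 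The repair is to apply Krein--Milman to the state space of $C(M)$, whose extreme points (the pure states of a commutative $C^*$-algebra) are exactly the point evaluations, and then pull the conclusion back, since the narrow topology only tests continuous functions; alternatively, one assumes $M$ metrizable, where Borel equals Radon and every $\{0,1\}$-valued measure is a Dirac. Since the paper invokes the theorem only for $X$ and $X\times U$, which are separable metric spaces, your proof is complete and correct in the setting that matters, and you yourself flagged the bijectivity and Hausdorffness checks; the Dieudonn\'e-type obstruction is the only point you did not anticipate.
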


\begin{proposition}
	\label{prodCtrb}
	Let $\mu_0, \mu^f \in \mathcal{P}(X)$ be Borel probability measures with compact supports, such that ${\rm supp} ~ \mu^f \subseteq  R_N^{\mathbf{x}}$ for each $ \mathbf{x} \in {\rm supp} ~ \mu_0$. Then there exists a sequence of measures $(\nu_m)_{m=0}^{N-1} \in \mathcal{P} (X \times U)$ such that 
	\begin{equation}
	\mu_{n+1} = T_{\#} \nu_{n}, ~~ n = 0,1,...,N -1,
	\end{equation}
	with $\nu_{n}(A \times U) = \mu_{n}(A)$ for all $A \in \mathcal{B}(X)$, and $\mu_{N} = \mu^f$.
\end{proposition}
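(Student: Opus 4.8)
The plan is to reduce the general case to the discrete case already handled by Lemma \ref{duadirac}, via a narrow approximation of $\mu_0$ and $\mu^f$ by finite convex combinations of Dirac measures, and then to pass to the limit using the narrow compactness furnished by Theorem \ref{dense_LC} together with the narrow continuity of pushforward maps. First I would set $A_0 = {\rm supp}~\mu_0$ and $A_f = {\rm supp}~\mu^f$, both compact by hypothesis, and introduce the iterated reachable sets $R_0 = A_0$ and $R_m = T(R_{m-1} \times U)$ for $m \geq 1$. Since $U$ is compact and $T$ is continuous, an easy induction shows that each $R_m$ is compact and that $R_m = \cup_{\mathbf{x} \in A_0} R_m^{\mathbf{x}}$. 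Using Theorem \ref{dense_LC} (applied to the locally compact Hausdorff space $X$), I would then select sequences $\mu_0^k \in D_{A_0}$ and $\mu^{f,k} \in D_{A_f}$ converging narrowly to $\mu_0$ and $\mu^f$, respectively. Because these approximations have supports contained in $A_0$ and $A_f$, the reachability hypothesis is inherited: for each $\mathbf{x} \in {\rm supp}~\mu_0^k \subseteq A_0$ one has ${\rm supp}~\mu^{f,k} \subseteq A_f = {\rm supp}~\mu^f \subseteq R_N^{\mathbf{x}}$. Hence Lemma \ref{duadirac} applies to each pair $(\mu_0^k, \mu^{f,k})$, producing measures $(\nu_m^k)_{m=0}^{N-1}$ with $\nu_m^k(\cdot \times U) = \mu_m^k$, $\mu_{m+1}^k = T_{\#} \nu_m^k$, and $\mu_N^k = \mu^{f,k}$, where each intermediate $\mu_m^k$ is supported in $R_m$ and therefore each $\nu_m^k$ is supported in the fixed compact set $R_m \times U$.

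Next I would extract limits and pass to the limit in the constraints. Since $X \times U$ is separable and metrizable, the narrow topology on the set of probability measures supported in $R_m \times U$ is compact and metrizable, so a finite diagonal extraction over $m = 0,\dots,N-1$ yields a single subsequence along which $\nu_m^k \to \nu_m$ narrowly for every $m$, with each $\nu_m$ supported in $R_m \times U$. For any continuous map $G$, narrow convergence $\nu_m^k \to \nu_m$ implies $G_{\#}\nu_m^k \to G_{\#}\nu_m$ narrowly, since $f \circ G \in C_b(X \times U)$ for every $f \in C_b(X)$. Taking $G = \pi_X$, the projection onto $X$, gives $\mu_m := \nu_m(\cdot \times U) = \lim_k \mu_m^k$, and in particular $\nu_0(\cdot \times U) = \mu_0$; taking $G = T$ gives $T_{\#}\nu_m = \lim_k T_{\#}\nu_m^k = \lim_k \mu_{m+1}^k = \mu_{m+1}$, and for $m = N-1$ this yields $T_{\#}\nu_{N-1} = \lim_k \mu^{f,k} = \mu^f = \mu_N$. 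This verifies all the asserted constraints.

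The main obstacle is not the limit-passing itself but securing the two structural facts that license it: first, that the iterated reachable sets $R_m$ are compact, so that the entire approximating family $(\nu_m^k)_k$ lies in one narrowly compact set and subsequential limits exist; and second, that the reachability hypothesis survives discretization, which is precisely why the Dirac approximations must be drawn from $D_{A_0}$ and $D_{A_f}$ rather than from an arbitrary dense family. I would also take care to confirm that the limiting marginal and pushforward constraints are mutually consistent, i.e.\ that $\nu_{m+1}(\cdot \times U) = T_{\#}\nu_m$; this follows by passing to the limit in the identity $\mu_{m+1}^k = \nu_{m+1}^k(\cdot \times U) = T_{\#}\nu_m^k$ valid at the discrete level.
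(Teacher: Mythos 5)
Your proof is correct and follows essentially the same route as the paper's: approximate $\mu_0$ and $\mu^f$ by finite Dirac combinations via Theorem \ref{dense_LC}, solve the approximating problems with Lemma \ref{duadirac}, and pass to the narrow limit using compactness of the approximating family and narrow continuity of pushforwards under the continuous maps $T$ and $\pi$. If anything, your execution is tighter at two points where the paper is loose: you keep the Dirac approximations supported in ${\rm supp}~\mu_0$ and ${\rm supp}~\mu^f$ so that the reachability hypothesis of Lemma \ref{duadirac} is verifiably inherited (the paper draws them from a larger set where this is not automatic), and you justify relative compactness by showing the iterated reachable sets $R_m$ are compact, rather than invoking tightness in passing.
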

\begin{proof}
Let $A  =  \cap_{ \mathbf{x} \in {\rm supp} ~\mu_0}R^\mathbf{x}_N \cup {\rm supp~ \mu_0}$. Clearly, the set $A$ is compact. From Theorem \ref{dense_LC}, we know that there exist sequences of measures $(\mu_0^i)_{i=1}^{\infty}, (\mu^{f,i})_{i=1}^{n\infty}  \in D_A$ such that $(\mu_0^i)_{i=1}^{\infty}$ and $ (\mu^{f,i})_{i=1}^{\infty}$ narrowly converge to $\mu_0$ and $\mu^f$, respectively. Then it follows from Lemma \ref{duadirac} that there exists a sequence of probability measures $(\nu^i_n)_{i=1}^{\infty}$ in $\mathcal{P}(X \times U)$ such that   
\begin{equation}
\mu^i_{n+1} = T_{\#} \nu^i_{n}, ~~ n = 0,1,...,N -1,
\end{equation}
with $\nu^i_{n}(A \times U) = \mu^i_{n}(A)$ for all $A \in \mathcal{B}(X)$ and $\mu^i_{N} = \mu^{f,i}$ for all $i \in \mathbb{Z}_+$. Since the map $T:X \times U \rightarrow X$ is continuous, the support of the measures $(\mu^i_{n+1},\nu^i_{n})$ is contained in a compact set for all $n \in \lbrace 0,1,...,N-1 \rbrace$ and all $i \in \mathbb{Z}_+$. Therefore, it trivially follows that there exists a compact set $Q $ such that $\mu_{n+1}^i(Q) > 1- \epsilon$ and $\nu^i_{n}(Q \times U) > 1- \epsilon$. This implies that the set of measures that satisfy the constraints $\nu^i_{n}(A \times U) = \mu^i_{n}(A)$ for all $A \in \mathcal{B}(X)$ and all $i \in \mathbb{Z}_+$ is {\it tight} \cite{billingsley2013convergence}, and therefore is relatively compact, i.e, every sequence of measures $(\mu_{n+1}^i,\nu_{n}^i)$ contains a narrowly converging subsequence, also denoted by $(\mu_{n+1}^i,\nu_{n}^i)$, for each $n \in \{0,1,...,N-1\}$. Since the map $T:X \times U \rightarrow X$ is continuous, the map $T_{\#} : \mathcal{P}(X \times U) \rightarrow \mathcal{P}(X)$  is narrowly continuous. Hence, for each $n \in \lbrace 0,1,...,N-1\rbrace$, there exists a limit measure $\nu_n \in \mathcal{P}(X \times U)$ such that $T_{\#} \nu^i_n$ narrowly converges to a unique limit $T_{\#} \nu_{n}$ as $i \rightarrow \infty$.  Moreover, it also follows that the subsequence of marginal measures $\nu^i_{n}(\cdot \times U) = \mu^j_{n}$ narrowly converges  to the unique limit  $\mu_{n}$ for each $n \in\lbrace 0,1,..., N-1 \rbrace$.
\end{proof}

From the above proposition, we obtain one of the main results of this paper.

\begin{theorem}
	\label{ctrthe}
	Let $\mu_0, \mu^f \in \mathcal{P}(X)$ be Borel probability measures with compact supports, such that ${\rm supp} ~ \mu^f \subseteq R_N^{\mathbf{x}}$  for each $\mathbf{x} \in {\rm supp}~ \mu_0$. Then there exists a sequence of stochastic feedback laws $(K_n)_{m=1}^{N-1} \in \mathcal{Y}(X,U)$ such that the system of equations \eqref{eq:ctrbcond} is satisfied, 
    and hence the measure $\mu^f$ can be reached from the measure $\mu_{0}$.
\end{theorem}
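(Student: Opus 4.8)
The plan is to close the gap between the ``convexified'' solution and the desired feedback laws by a disintegration (regular conditional probability) argument. Proposition \ref{prodCtrb} already supplies, for the given $\mu_0$ and $\mu^f$, a sequence $(\nu_n)_{n=0}^{N-1} \in \mathcal{P}(X \times U)$ together with intermediate measures $\mu_n := \nu_n(\cdot \times U)$ satisfying $\mu_{n+1} = T_{\#} \nu_n$ and $\mu_N = \mu^f$. All the transport content is therefore encoded in the $\nu_n$; what remains is to extract, for each $n$, a stochastic kernel $K_n \in \mathcal{Y}(X,U)$ whose closed-loop pushforward reproduces exactly the step carried out by $\nu_n$.

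First I would fix $n \in \{0,1,\dots,N-1\}$ and view $\nu_n$ as a Borel probability measure on the separable metric space $X \times U$ whose first marginal is precisely $\mu_n$. Since $X$ is a separable finite-dimensional manifold and $U$ is compact metric, $X \times U$ is a sufficiently regular (e.g.\ Polish-type) space, so I would apply the disintegration theorem to factor $\nu_n$ against $\mu_n$. This yields a $\mu_n$-almost-everywhere uniquely determined family $\{K_n(\mathbf{x},\cdot)\}_{\mathbf{x}\in X}\subset\mathcal{P}(U)$ for which $\mathbf{x}\mapsto K_n(\mathbf{x},A)$ is Borel measurable for every $A\in\mathcal{B}(U)$ and
\[
\int_{X \times U} g(\mathbf{x},\mathbf{u})\, d\nu_n(\mathbf{x},\mathbf{u}) = \int_X \int_U g(\mathbf{x},\mathbf{u})\, K_n(\mathbf{x},d\mathbf{u})\, d\mu_n(\mathbf{x})
\]
for every bounded Borel $g: X \times U \to \mathbb{R}$. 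These are exactly the two defining conditions for membership in $\mathcal{Y}(X,U)$, after extending $K_n(\mathbf{x},\cdot)$ by a fixed reference element of $\mathcal{P}(U)$ on the possible $\mu_n$-null exceptional set. I would then verify the closed-loop identity \eqref{eq:ctrbcond} by substituting $g(\mathbf{x},\mathbf{u}) = \mathbf{1}_A(T(\mathbf{x},\mathbf{u}))$ for an arbitrary $A\in\mathcal{B}(X)$, which gives
\[
(T^{cl,n}_{\#}\mu_n)(A) = \int_X \int_U \mathbf{1}_A(T(\mathbf{x},\mathbf{u}))\, K_n(\mathbf{x},d\mathbf{u})\, d\mu_n(\mathbf{x}) = \int_{X \times U} \mathbf{1}_A(T(\mathbf{x},\mathbf{u}))\, d\nu_n(\mathbf{x},\mathbf{u}) = (T_{\#}\nu_n)(A) = \mu_{n+1}(A),
\]
so that $\mu_{n+1} = T^{cl,n}_{\#}\mu_n$ for each $n$, and in particular $\mu_N = \mu^f$. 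Hence the sequence $(K_n)_{n=0}^{N-1}$ satisfies \eqref{eq:ctrbcond} and steers $\mu_0$ to $\mu^f$.

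The main obstacle is purely measure-theoretic: justifying that the disintegration exists under the stated topological hypotheses on $X$ and $U$, including the joint measurability of $\mathbf{x}\mapsto K_n(\mathbf{x},\cdot)$. The cleanest route is to note that, $X \times U$ being separable metric, the measures $\nu_n$ are Radon, so a standard disintegration theorem applies and delivers both the kernel and its measurability ``for free.'' Everything downstream---the change of variables collapsing the iterated integral back to an integral against $\nu_n$, and the identification with $T_{\#}\nu_n = \mu_{n+1}$---is routine and follows directly from Proposition \ref{prodCtrb}.
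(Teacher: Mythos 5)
Your proposal is correct and follows essentially the same route as the paper's own proof: both take the convexified solution $(\nu_n)_{n=0}^{N-1}$ from Proposition \ref{prodCtrb} and apply the disintegration theorem to each $\nu_n$ against its first marginal $\mu_n$ to extract the kernels $K_n \in \mathcal{Y}(X,U)$. Your write-up is in fact somewhat more complete than the paper's, since you explicitly verify the closed-loop identity by substituting $g(\mathbf{x},\mathbf{u}) = \mathbf{1}_A(T(\mathbf{x},\mathbf{u}))$ into the disintegration formula, a step the paper leaves implicit.
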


\begin{proof}
	Note that $X$ and $U$ are separable. Hence, the product $\sigma$-algebra on $X \times U$ is equal to $\mathcal{B}(X \times U)$. Then, given a measure $\nu \in \mathcal{P}(X \times Y)$, from the {\it disintegration theorem} \cite{florescu2012young}[Theorem 3.2] there exists a measure $\mu \in \mathcal{P}(X)$ and stochastic feedback law $K \in \mathcal{Y}(X,U)$ such that 
\begin{equation}
\int_{A \times B}d\nu(\mathbf{x},\mathbf{u}) =\int_{A} \int_B K(\mathbf{x},d\mathbf{u}) d\mu(\mathbf{x})
\end{equation}
for all $A \in \mathcal{B}(X)$ and all $B \in \mathcal{B}(U)$. Then the result follows from Proposition \ref{prodCtrb}. In particular, using the measures $(\nu_m)_{m=0}^{N-1} \in \mathcal{P}(X \times U)$, by disintegration, the stochastic feedback laws $(K_m)_{m=0}^{N-1} \in \mathcal{Y}(X,U)$ can be constructed such that the system of equations \eqref{eq:ctrbcond} holds true.
\end{proof}


\begin{remark} (\textbf{Conservatism of controllability result})
Theorem \ref{ctrthe} gives a  sufficient, but not necessary, condition on 
system \eqref{eq:origsys} for Problem \ref{reachprob} to admit a solution: namely, that each point in the support of the target measure be reachable from each point in the support of the initial measure. 
The controllability result in Theorem \ref{ctrthe} is conservative because we do not, in general, require this condition. To see this explicitly, consider the trivial example where $X = \mathbb{R}$, $U = \lbrace 0  \rbrace$, and $T(x,u) = x+u$. Suppose we define the initial and target measures as $\mu_0 = \mu^{f} = \frac{1}{2}\delta_{x_1} + \frac{1}{2} \delta_{x_2}  $ for some $x_1  \neq x_2$ in $\mathbb{R}$. Then it is straightforward to see that the target measure is reachable from the initial measure. However, the system is nowhere controllable in $\mathbb{R}$. More specifically, the points $x_1$ and $x_2$ are not reachable from each other.
\end{remark}

\section{OPTIMAL CONTROL}

\begin{figure*}
	\centering
	\begin{subfigure}[t]{0.32\textwidth}
		\centering
		\includegraphics[trim = 0cm 2cm 0cm 0cm,  width= \textwidth]{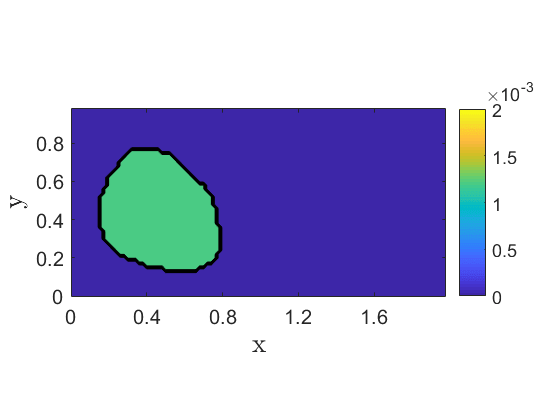}
		\caption{$n = 0$ (Initial measure)}
		\label{fig:dguni2}
	\end{subfigure}%
	~
	\begin{subfigure}[t]{0.32\textwidth}
		\centering
		\includegraphics[trim = 0cm 2cm 0cm 0cm, width= \textwidth]{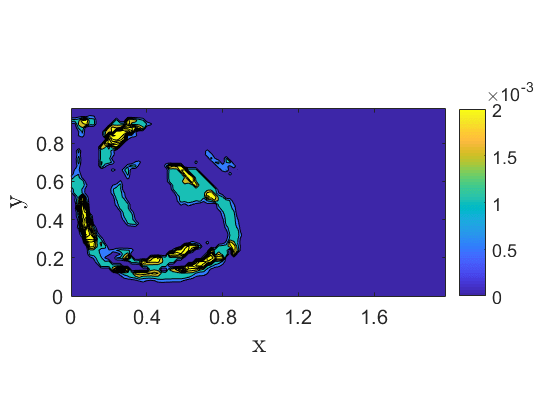}
		\caption{$n = 2$ }
		\label{fig:dguni2-min}
	\end{subfigure}
    ~
	\begin{subfigure}[t]{0.32\textwidth}
		\centering
		\includegraphics[trim = 0cm 2cm 0cm 0cm, width= \textwidth]{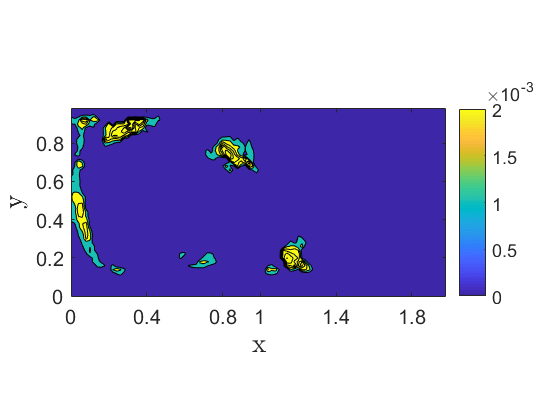}
		\caption{$n = 4$ }
		\label{fig:dguni2-min}
	\end{subfigure}
	\centering
	\begin{subfigure}[t]{0.32\textwidth}
		\centering
		\includegraphics[trim = 0cm 2cm 0cm 0cm, width= \textwidth]{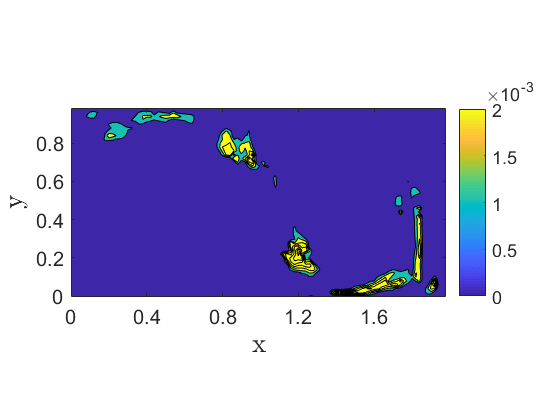}
		\caption{$n = 6$}
		\label{fig:dguni2}
	\end{subfigure}
	\begin{subfigure}[t]{0.32\textwidth}
		\centering
		\includegraphics[trim = 0cm 2cm 0cm 0cm, width= \textwidth]{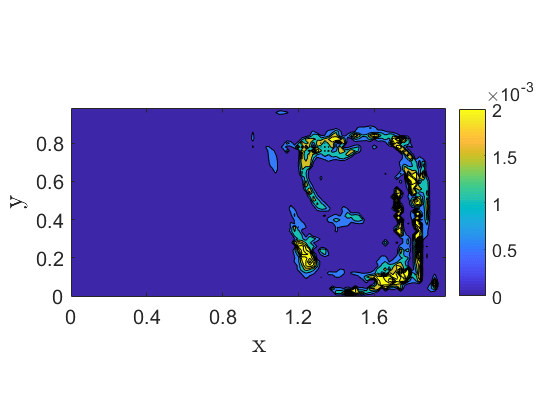}
		\caption{$n = 8$ }
		\label{fig:dguni2}
	\end{subfigure}
	~
	\begin{subfigure}[t]{0.32\textwidth}
		\centering
		\includegraphics[trim = 0cm 2cm 0cm 0cm, width= \textwidth]{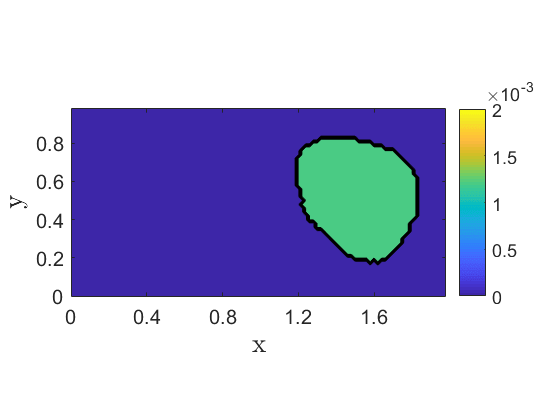}
		\caption{$n = 10$ (Final measure)}
		\label{fig:dguni3}
	\end{subfigure}
	\caption{Solution of the optimal transport problem at several times $n$ for unicycles in a double-gyre flow model}
	\label{fig:MainGraph1}
\end{figure*}

This section addresses Problem \ref{OCP}. As in the proof of the controllability result in Theorem \ref{ctrthe}, we will apply the disintegration theorem \cite{florescu2012young}[Theorem 3.2] to the correspondence between elements of $\mathcal{Y}(X,U)$ and elements of $\mathcal{P}(X \times U)$ with a given marginal. Hence, the optimization problem \eqref{eq:op1main1}-\eqref{eq:op1main2} 
can be convexified 
by replacing stochastic feedback laws $K_n \in \mathcal{Y}(X,U)$ with elements $\nu_n  \in \mathcal{P}(X \times U)$ and by enforcing appropriate constraints on the marginals of the measures $\nu_n$. These modifications allow us to frame the optimization problem in Problem \ref{OCP} as an equivalent infinite-dimensional linear programming problem: 
 \begin{eqnarray}
\label{eq:Optim1a}
\min_{\substack{\mu_{m+1} \in \mathcal{P}(X),\\ ~\nu_m \in \mathcal{P}(X \times U)}}~~ \sum_{m=0}^{N-1} \int_{X \times U} c(\mathbf{x},\mathbf{u}) d\nu_{m}(\mathbf{x},\mathbf{u})
\end{eqnarray}
subject to the constraints
\begin{align}
\label{eq:Optim1b}
\mu_{n+1} &= T_{\#}\mu_n  , ~~ n =0,1,...,N-1; ~~~~\mu_{N} = \mu^f,  \nonumber \\
\pi_{\#}\nu_{n} &= \mu_{n},  
\end{align}
where $\pi:X\times U \rightarrow X$ is the projection map defined by  $\pi(\mathbf{x},\mathbf{u})= \mathbf{x}$ for all $\mathbf{x} \in X$ and all $\mathbf{u} \in U$. Here, the constraints $\pi_{\#}\nu_{n} = \mu_{n}$ ensure that, for each $n \in \lbrace 0,1,...,N-1 \rbrace$, $\nu_{n}(A \times U) = (\pi_{\#}\nu_{n})(A) = \mu_{n}(A)$ for all $A \in \mathcal{B}(X)$. Hence, we have the following result. 

\begin{theorem}
	Let $\mu_0, \mu^f \in \mathcal{P}(X)$ be Borel probability measures with compact supports, such that ${\rm supp} ~ \mu^f \subseteq R_N^{\mathbf{x}}$  for each $\mathbf{x} \in {\rm supp}~ \mu_0$. Then the optimization problem \eqref{eq:Optim1a}-\eqref{eq:Optim1b} has a solution $ (\mu_{n+1},\nu_{n}), ~ n=0,1,...,N-1$.
    \label{thOptim}
\end{theorem}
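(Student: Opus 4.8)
The plan is to prove existence of a minimizer by the direct method of the calculus of variations: exhibit a minimizing sequence of feasible points, extract a narrowly convergent subsequence via a tightness argument, and then pass to the limit, checking that the limit point remains feasible and attains the infimum. The hypotheses on $\mu_0$, $\mu^f$, $U$, and $T$ are exactly what make each of these three steps work.

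First I would check that the feasible set is nonempty. This is immediate from Proposition \ref{prodCtrb} (equivalently Theorem \ref{ctrthe}): under the assumption ${\rm supp}~\mu^f \subseteq R_N^{\mathbf{x}}$ for each $\mathbf{x} \in {\rm supp}~\mu_0$, there exists at least one sequence $(\nu_m)_{m=0}^{N-1}$ in $\mathcal{P}(X \times U)$ satisfying the constraints \eqref{eq:Optim1b}. Moreover, because $\mu_0$ has compact support, $U$ is compact, and $T$ is continuous, every feasible trajectory has all of its measures supported in a single fixed compact set: $\nu_0$ is supported in $({\rm supp}~\mu_0) \times U$, $\mu_1 = T_{\#}\nu_0$ in the compact set $T(({\rm supp}~\mu_0) \times U)$, and inductively each $\mu_n$ and $\nu_n$ is supported in a compact set $K$ determined only by ${\rm supp}~\mu_0$, $U$, and $T$. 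Since $c$ is continuous, it is bounded on $K$, so the objective is finite and bounded below over the feasible set; let $J^\ast$ denote the infimum and let $(\mu_n^k, \nu_n^k)_{k}$ be a minimizing sequence.

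Next I would extract a convergent subsequence. All the $\nu_n^k$ and $\mu_n^k$ are supported in the common compact set $K$, so by Theorem \ref{dense_LC} (or, equivalently, Prokhorov's theorem together with the tightness argument already used in the proof of Proposition \ref{prodCtrb}) the family is narrowly relatively compact. Passing to a subsequence, not relabeled, I obtain narrow limits $\nu_n^k \to \nu_n$ and $\mu_n^k \to \mu_n$ for each $n$. The remaining point is that the limit is feasible, i.e., that the constraint set is closed under narrow convergence. Because $T$ and $\pi$ are continuous, the pushforward maps $T_{\#}$ and $\pi_{\#}$ are narrowly continuous, so each constraint $\mu_{n+1} = T_{\#}\nu_n$, each marginal constraint $\pi_{\#}\nu_n = \mu_n$, and the terminal condition $\mu_N = \mu^f$ all pass to the limit. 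Finally, since all the measures live in the common compact set $K$, I may replace $c$ by a bounded continuous function agreeing with it on $K$; the objective $\nu \mapsto \int_{X \times U} c\,d\nu$ is then narrowly continuous on the minimizing family, whence $\sum_{m=0}^{N-1}\int_{X \times U} c\,d\nu_m = \lim_k \sum_{m=0}^{N-1}\int_{X \times U} c\,d\nu_m^k = J^\ast$. Thus the feasible limit $(\mu_n, \nu_n)$ attains the infimum and is the desired solution.

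I expect the main obstacle to be the combination of the tightness/compactness step with the closedness of the feasible set: one must verify carefully that the minimizing sequence's measures share a common compact support (so that narrow compactness applies) and that narrow limits preserve all three families of constraints simultaneously. The remaining ingredients are comparatively routine, since the uniform compact-support structure upgrades lower semicontinuity of the objective to full narrow continuity and guarantees finiteness of the cost.
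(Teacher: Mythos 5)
Your proposal is correct and follows essentially the same route as the paper's proof: nonemptiness of the feasible set via the controllability result, propagation of compact supports to bound the cost and obtain tightness, extraction of a narrowly convergent subsequence from a minimizing sequence, and narrow continuity of the pushforward maps and the cost functional to pass to the limit. In fact, your write-up is slightly more careful than the paper's in two spots it glosses over --- explicitly verifying that the narrow limit remains feasible (closedness of the constraint set under $T_{\#}$, $\pi_{\#}$, and the terminal condition) and handling the possible unboundedness of $c$ on all of $X \times U$ by restricting to the common compact support.
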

\begin{proof}
The proof follows the standard compactness-based arguments in optimization.
From Theorem \ref{ctrthe}, we know that the set of measures satisfying  constraints \eqref{eq:Optim1b} is non-empty. 
Moreover, the map $c:X \times U \rightarrow \mathbb{R}$ is continuous. 
Since $T$ is continuous, measures with compact support are pushed forward to measures with compact support. This implies that for any choice of measure $\nu_{n}$, ${\rm supp} ~ \mu_{n+1}$ is contained in a compact set since ${\rm supp} ~ \mu_0$ is contained in a compact set. 
Therefore, $\sum_{m=0}^{N-1} \int_{X \times U} c(\mathbf{x},\mathbf{u}) d\nu_{m}(\mathbf{x},\mathbf{u})$ is bounded from below on the set of admissible measures. Hence, there exists a minimizing sequence of measures $(\mu_{n+1}^i,\nu^i_{n})_{i=1}^\infty $, with $(\mu_{n+1}^i,\nu^i_{n}) \in \mathcal{P}(X) \times \mathcal{P}(X \times U)$ for each $n \in \lbrace 0,1,...,N-1 \rbrace$, that satisfies the constraints \eqref{eq:Optim1b}. By {\it minimizing}, we mean that the sequence of measures $(\mu_{n+1}^i,\nu^i_{n})_{i=1}^\infty$ satisfies $ \lim_ {i \rightarrow \infty} \sum_{m=0}^{N-1} \int_{X \times U} c(\mathbf{x},\mathbf{u}) d\nu^i_m(\mathbf{x},\mathbf{u}) = \inf_{\mu_{m+1} \in \mathcal{P}(X), ~\nu_m \in \mathcal{P}(X \times U)}~~ \sum_{m=0}^{N-1} \int_{X \times U} c(\mathbf{x},\mathbf{u}) d\nu_m(\mathbf{x},\mathbf{u})$, with  the infimum taken over the constraint set \eqref{eq:Optim1b}. 
	We now confirm that there exist measures $(\mu_{n+1}^*,\nu^*_{n})$ that achieve this infimum. We recall that the support of the measures $(\mu_{n+1},\nu_{n})$ is compact for all $n \in \lbrace 0,1,...,N-1 \rbrace$ and that the set of measures that satisfy the constraints \eqref{eq:Optim1b} is relatively compact, i.e, every sequence of measures $(\mu_{n+1}^i,\nu_{n}^i)$ contains a narrowly converging subsequence $(\mu_{n+1}^i,\nu_{n}^i)$. The map $\gamma \mapsto \int_{X \times U} c(\mathbf{x},\mathbf{u}) d\gamma(\mathbf{x},\mathbf{u})$, a map from $\mathcal{P}(X \times U)$ to $\mathbb{R}$, is narrowly continuous. Hence, there exist limit measures $(\mu^*_{n+1},\nu^*_{n})$ such that $\sum_{m=0}^N \int_{X \times U} c(\mathbf{x},\mathbf{u}) d\nu^*_{m}(\mathbf{x},\mathbf{u}) = \inf_{\mu_{m+1} \in \mathcal{P}(X), ~\nu_m \in \mathcal{P}(X \times U)}~~ \sum_{m=0}^{N-1} \int_{X \times U} c(\mathbf{x},\mathbf{u}) d\nu_{m}(\mathbf{x},\mathbf{u})$, subject to the constraints \ref{eq:Optim1b}. This concludes the proof.
\end{proof}

By disintegration of the measures $\nu_m$ in Theorem \ref{thOptim}, it is straightforward to conclude the following result.

\begin{theorem}
	Let $\mu_0, \mu^f \in \mathcal{P}(X)$ be Borel probability measures with compact supports, such that ${\rm supp} ~ \mu^f \subseteq R_N^{\mathbf{x}}$  for each $ \mathbf{x} \in {\rm supp}~ \mu_0$. Then the optimization problem in Problem \eqref{OCP} has a solution $ (\mu_{n+1},K_{n}), ~n=0,...,N-1$.
\end{theorem}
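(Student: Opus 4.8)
The plan is to transfer the optimal solution of the convexified linear program in Theorem~\ref{thOptim} back to Problem~\ref{OCP} by disintegration, exactly as was done for the controllability result in Theorem~\ref{ctrthe}. The key observation is that the two problems are equivalent: the feasible set of Problem~\ref{OCP}, consisting of sequences of stochastic feedback laws $K_n \in \mathcal{Y}(X,U)$ together with the induced measures $\mu_n$, is in bijective correspondence with the feasible set of \eqref{eq:Optim1a}--\eqref{eq:Optim1b}, consisting of sequences of joint measures $\nu_n \in \mathcal{P}(X\times U)$ satisfying the marginal constraint $\pi_{\#}\nu_n = \mu_n$, and this correspondence preserves both the constraints and the value of the objective functional.

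First I would invoke Theorem~\ref{thOptim} to obtain an optimal pair $(\mu_{n+1}^*, \nu_n^*)$, $n = 0,\ldots,N-1$, for the linear program. Applying the disintegration theorem \cite{florescu2012young}[Theorem 3.2] to each $\nu_n^*$ relative to its marginal $\mu_n^* = \pi_{\#}\nu_n^*$, I obtain stochastic feedback laws $K_n \in \mathcal{Y}(X,U)$ satisfying
\begin{equation*}
\int_{A\times B} d\nu_n^*(\mathbf{x},\mathbf{u}) = \int_A \int_B K_n(\mathbf{x},d\mathbf{u})\, d\mu_n^*(\mathbf{x})
\end{equation*}
for all $A \in \mathcal{B}(X)$ and all $B \in \mathcal{B}(U)$. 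With this identity one checks directly that $T_{\#}\nu_n^* = T^{cl,n}_{\#}\mu_n^*$, so that the constraints \eqref{eq:op1main2} of Problem~\ref{OCP} hold, and that
\begin{equation*}
\int_{X\times U} c(\mathbf{x},\mathbf{u})\, d\nu_n^*(\mathbf{x},\mathbf{u}) = \int_X\int_U c(\mathbf{x},\mathbf{u}) K_n(\mathbf{x},d\mathbf{u})\, d\mu_n^*(\mathbf{x}),
\end{equation*}
so that, summing over $n$, the objective value of $(\mu_n^*, K_n)$ in Problem~\ref{OCP} equals the optimal value of the linear program \eqref{eq:Optim1a}--\eqref{eq:Optim1b}.

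Finally I would argue that $(\mu_n^*, K_n)$ is in fact optimal for Problem~\ref{OCP}. For the reverse direction of the correspondence, given any feasible pair $(\mu_n, K_n)$ for Problem~\ref{OCP}, I define $\nu_n$ by gluing, i.e. $\nu_n(A\times B) = \int_A K_n(\mathbf{x},B)\, d\mu_n(\mathbf{x})$; then $\nu_n$ is feasible for the linear program with identical objective value. Hence the infimum of Problem~\ref{OCP} is bounded below by the optimal value of the linear program, which is attained by $(\mu_n^*, K_n)$ by the previous step; combining the two directions, the optimal values coincide and $(\mu_n^*, K_n)$ attains the infimum of Problem~\ref{OCP}.

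I expect the main subtlety to be the careful verification of this two-way correspondence between feasible sets --- in particular confirming that the glued measure $\nu_n$ is a well-defined element of $\mathcal{P}(X\times U)$ with the correct marginal, and that disintegration and gluing are mutually inverse --- since this is precisely what guarantees that the infima of the two problems agree. The measurability requirements built into $\mathcal{Y}(X,U)$ together with the separability of $X$ and $U$ (which forces the product $\sigma$-algebra to equal $\mathcal{B}(X\times U)$, as already noted in the proof of Theorem~\ref{ctrthe}) are exactly what make the disintegration theorem applicable, so that the remaining verifications are routine.
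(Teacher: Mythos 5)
Your proposal is correct and follows essentially the same route as the paper: the paper's proof is the one-line observation that the result follows ``by disintegration of the measures $\nu_m$ in Theorem~\ref{thOptim},'' relying on the equivalence between Problem~\ref{OCP} and the linear program \eqref{eq:Optim1a}--\eqref{eq:Optim1b} asserted just before Theorem~\ref{thOptim}. You merely make explicit what the paper leaves implicit --- the two-way disintegration/gluing correspondence needed to conclude optimality (not just feasibility) of the disintegrated solution --- which is a faithful elaboration rather than a different argument.
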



\section{NUMERICAL OPTIMIZATION}

In this section, we briefly describe a numerical approach to solving the optimization problem in Problem \ref{OCP}. In both the examples that we consider in Section \ref{sec:nuex}, the state space $X$ is taken to be a compact subset of $\mathbb{R}^2$. This subset $X$ is partitioned into $n_x \in \mathbb{Z}_{+}$
sets, $\tilde{X} = \lbrace \Omega_1,..., \Omega_{n_x} \rbrace$, 
whose union is $X$ and whose intersections have zero Lesbesgue measure.
The set of control inputs $U$ is 
approximated as a set of $n_u \in \mathbb{Z}_{+}$ discrete elements,
$\tilde{U} = \lbrace \gamma_1,...,\gamma_{n_u} \rbrace$, 
where $\gamma_i \in U$ for each $i$. 
We then use the {\it Ulam-Galerkin method} \cite{bollt2013applied} to construct an approximating controlled Markov chain on a finite state space $V = \lbrace 1,...,n_x \rbrace $. 
In the uncontrolled setting, this method is a classical technique used to construct approximations of pushforward maps induced by dynamical systems, also known as Perron-Frobenius operators.

We define the {\it controlled} transition probabilities for the Markov chain on $V$ as follows:
\begin{equation}
\tilde{p}_{ij}^k = \frac{\tilde{m}(T_k^{-1}(\Omega_j) \cap \Omega_i)}{\tilde{m}(\Omega_i)}, \nonumber
\end{equation}
where $\tilde{m}$ is the Lebesgue measure and $T_k  =T(\cdot,\gamma_k)$. The quantity $\tilde{p}_{ij}^k$ is the probability of the system state entering the set $\Omega_j$ in the next time step, given that this state is uniformly randomly distributed over the set $\Omega_{i}$ (identified with $i \in V$) and the control input is chosen to be $\gamma_k$. We also define an equivalent of the stochastic feedback law $K_n$ in the discretized case that we consider. Toward this end, we denote by 
$\lambda_n^{k,i}$ the probability of choosing the 
control input $\gamma_k$, given that the system state is in $\Omega_i$
at time $n$. We define the variables $\tilde{\nu}_{n}^{k,i} = \tilde{\mu}_n^i \lambda_n^{k,i}$, where $\tilde{\mu}_n^i$ is the probability of the state being in $\Omega_i$ at time step time $n$. Additionally, let $\tilde{c}_{i,k} = \int_{\Omega_i}c(\mathbf{x},\gamma_k)d\mathbf{x}$ be the average cost of the state being in $\Omega_i$ and the control input given by $\gamma_k$. 
\begin{figure*}
	\centering
	\begin{subfigure}[t]{0.235\textwidth}
		\centering
		\includegraphics[width= \textwidth]{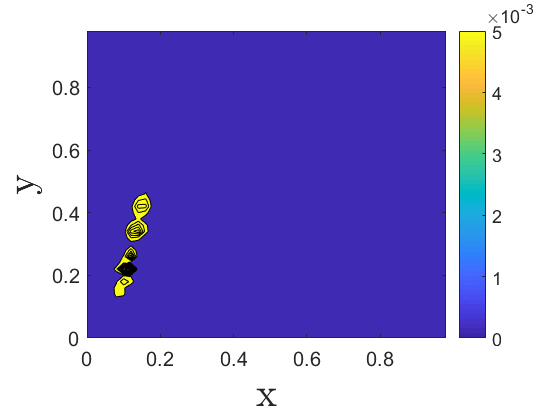}
		\caption{$n = 4$}
		\label{fig:dint4}
	\end{subfigure}%
	~
    \begin{subfigure}[t]{0.235\textwidth}
		\centering
		\includegraphics[width= \textwidth]{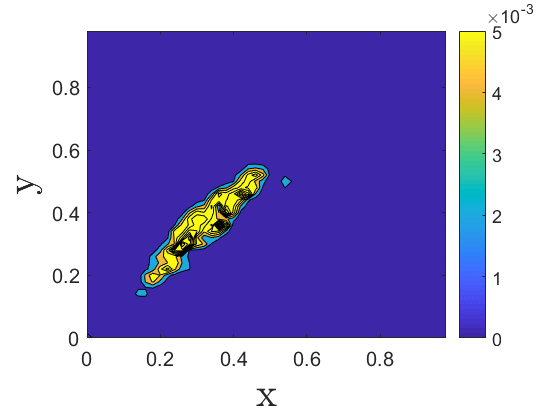}
		\caption{$n = 8$}
		\label{fig:dint8}
	\end{subfigure}%
	~
	\begin{subfigure}[t]{0.235\textwidth}
		\centering
		\includegraphics[width= \textwidth]{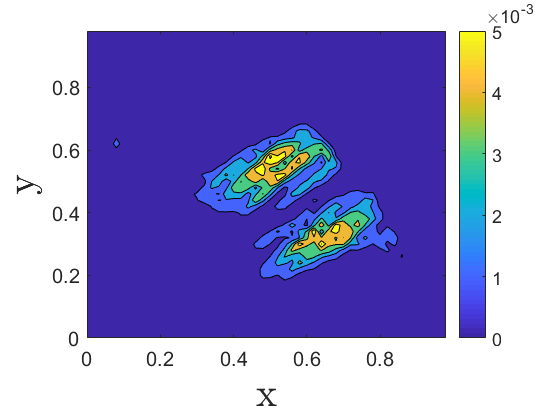}
		\caption{$n = 12$ }
		\label{fig:dint12}
	\end{subfigure}
    ~
    \begin{subfigure}[t]{0.235\textwidth}
		\centering
		\includegraphics[width= \textwidth]{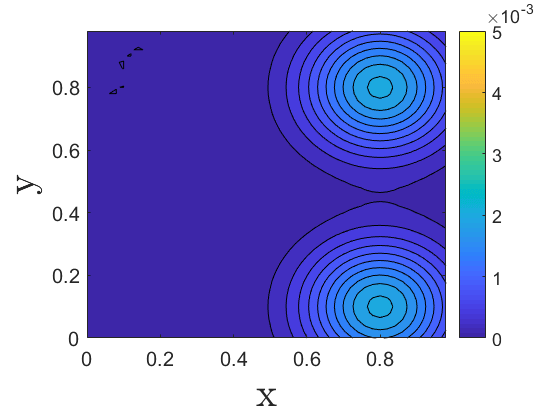}
		\caption{$n = 15$ }
		\label{fig:dint15}
	\end{subfigure} 
\caption{Solution of the optimal transport problem at several times $n$ for a  double-integrator system}
	\label{fig:MGDint}
\end{figure*}


Given these parameters and specified initial and target measures $\tilde{\mu}_{0}, \tilde{\mu}^f \in \mathcal{P}(\tilde{X})$, we can define the finite-dimensional equivalent of the linear programming problem \eqref{eq:Optim1a}-\eqref{eq:Optim1b} as follows: 
 \begin{eqnarray}
\min_{\tilde{\mu}_{m+1}^i, \tilde{\nu}_{m}^{k,i} \in \mathbb{R}_{\geq 0} }~~ \sum_{m=0}^{N-1} \sum_{i=1}^{n_x} \sum_{k=1}^{n_u} \tilde{c}_{i,k} \tilde{\nu}_{m}^{k,i} 
\end{eqnarray}
subject to the constraints
\begin{eqnarray}
&\tilde{\mu}^j_{n+1} = \sum_{k = 1}^{n_u}\sum_{i = 1}^{n_x} \tilde{p}^{k}_{ij}\tilde{\nu}_{n}^{k,i}, \nonumber \\
 &\tilde{\mu}_{N}^j = (\tilde{\mu}^f)^j,  \nonumber \\
&\sum_{i=1}^{n_x} \tilde{\mu}^i_{n+1} = 1, ~ \sum_{k=1}^{n_u} \tilde{\nu}_n^{k,j} = \tilde{\mu}^j_n, 
\end{eqnarray}
for $n \in \lbrace  0,..., N-1 \rbrace$ and  $j \in \lbrace  1,..., n_x \rbrace$.

After solving this linear programming problem, we can extract the control laws $\lambda_n^{k,i}$ by setting $\lambda_n^{k,i} = \frac{\tilde{\nu}_{n}^{k,i}}{\tilde{ \mu}_n^i}$ if  $\tilde{\mu}_n^i \neq 0 $ and $\lambda_n^{k,i} = 0$ otherwise. The resulting Markov chain evolves according to the equation $ \tilde{\mu}^j_{n+1} = \sum_{k = 1}^{n_u}\sum_{i = 1}^{n_x} \tilde{p}^{k}_{ij}\lambda_n^{k,i} \tilde{\mu}^i_n$. 

\section{SIMULATION EXAMPLES}\label{sec:nuex}

In this section, we apply the numerical optimization procedure described in the previous section to two examples. Neither example can be solved by classical optimal transport methods, due to the nonlinearity of the control system (Example 1) or the bounds on the control set (Examples 1 and 2). In both examples, we define the cost function as $c(\mathbf{x},\mathbf{u}) = \|\mathbf{x}\|^2 +\|\mathbf{u}\|^2 $, where $\|\cdot\|$ represents the $2$-norm.

\subsection{Example 1: Unicycles in a Time-Periodic Double Gyre}
We consider the system
\begin{equation}
\mathbf{x}_{n+1}  = F(\mathbf{x}_n) + G(\mathbf{u}),
\end{equation}
where $\mathbf{x}_n = [x_n~y_n]^T \in X$, $\mathbf{u}= [u^1 ~ u^2]^T \in U$, and 
$G(\mathbf{u}) = [u^1 \cos(u^2) ~~ u^1 \sin(u^2)]^T$. The phase space is $X=[0,2]\times[0,1]$, and the set of control inputs is $U = [-1,1] \times [0,2  \pi]$. The final time is set to $N = 10$.
To define the map $F: X \rightarrow X$, we consider the double-gyre system \cite{elamvazhuthi2016optimal}:
\begin{align}
\dot{x}&=-\pi A\sin(\pi f(x,t))\cos(\pi y), \label{eq:dg1} \\
\dot{y}&=\pi A\cos(\pi f(x,t))\sin(\pi y)\dfrac{df(x,t)}{dx}, \label{eq:dg2}
\end{align}
where $f(x,t)=\beta\sin(\omega t)x^2+(1-2\beta\sin(\omega t))x$ is the time-periodic forcing in the system. The map $F$ is defined by setting $F(\mathbf{x})$ equal to the solution of
equations \eqref{eq:dg1}-\eqref{eq:dg2}, integrated over the time period $\tau$. In this example, we define $A=0.25$, $\beta=0.25$, and $\omega=2\pi$, which results in $\tau=1$. The set $X$ is not invariant for all choices of control inputs in $U$. Hence, since this set must be approximatable by a finite set, we define $F(\mathbf{x}) + G(\mathbf{u}) \triangleq \mathbf{x}$ if  $F(\mathbf{x}) + G(\mathbf{u}) \notin X$ 
for some $(\mathbf{x},\mathbf{u}) \in X \times U$.
The initial and target measures are chosen to be uniform over certain \emph{almost-invariant sets} \cite{bollt2013applied} in the left and right halves of the domain, respectively. The optimal transport shown in Fig. \ref{fig:MainGraph1} exploits \emph{lobe dynamics}, i.e., the control inputs push the initial measure onto regions bounded by stable and unstable manifolds. As a result, the measure is transported into the right half of the domain under the action of $F$.
\vspace{-1mm}

\subsection{Example 2: Double-Integrator System}
In this example, we consider the following system: 
\vspace{-1mm}
\begin{equation}\label{eq:dintsys}
x_{n+1} = x_n + 0.15 y_n, ~~~
y_{n+1} = y_n + u, 
\end{equation}
with $[x_n ~y_n]^T \in X = [0,1]^2$ and $ u \in U = [-0.25,0.25]$.
The final time is set to $N = 15$. For unbounded control inputs, this control system can be verified to be globally controllable using the Kalman rank condition. For compact control sets, controllability is harder to verify without numerical computation. The initial measure is taken to be the Dirac measure concentrated at $[0~0]^T \in X$. 
The target measure is a linear combination of Gaussian distributions that are centered at $[0.8~0.1]^T$ and $[0.8~0.8]^T$, as shown in Fig. \ref{fig:dint15}. Measures at three intermediate times are shown in Fig. \ref{fig:dint4}-\ref{fig:dint12}. The control map adds a ``drift'' term $0.15 y_n$ to $x_{n+1}$ in equation \eqref{eq:dintsys}, which makes the system controllable despite the fact that it is underactuated. Figure \ref{fig:MGDint} confirms that this drift drives the initial measure exactly to the target measure at $N=15$.

\section{CONCLUSIONS}
In this paper, we have presented a relaxed version of the optimal transport problem for discrete-time nonlinear systems. We showed that under mild assumptions on the controllability of the original system, the extended system on the space of measures is controllable. This enabled us to prove the existence of solutions of an optimal transport problem for discrete-time nonlinear systems. 
One direction for future work is to explore conditions under which deterministic feedback maps exist for the optimal transport problem. Another interesting question is whether one can provide guarantees on the performance of the controllers obtained by solving the numerical optimization problem when these controllers are implemented on the original nonlinear system.






\bibliographystyle{plain}
\bibliography{StTrans}

\end{document}